\pgfplotsset{width=12cm,compat=1.9}
\definecolor{bblue}{HTML}{4F81BD}
\definecolor{rred}{HTML}{C0504D}
\definecolor{ggreen}{HTML}{9BBB59}
\definecolor{ppurple}{HTML}{9F4C7C}
\newtheorem{proposition}{Proposition}
\theoremstyle{definition}
\theoremstyle{proof}
\theoremstyle{definition}
\newtheorem{definition}{Definition}
\theoremstyle{remark}
\newtheorem{remark}{Remark}
\numberwithin{equation}{section}
\begin{document}
\title{Game Intelligence: Theory and Computation\footnote{I conducted this study without external funding and have no financial or non-financial conflicts of interest to disclose.}}
\author{Mehmet Mars Seven\thanks{Department of Political Economy, King's College London, London, WC2R 2LS, UK. mehmetmarsseven@gmail.com}}

\date{Revised: 26 October, 2025; First version: 9 December, 2022}
%\date{\today}
	\maketitle
\begin{abstract}
In this paper, I formalize intelligence measurement in games by introducing mechanisms that assign a real number---interpreted as an intelligence score---to each player in a game. This score quantifies the ex-post strategic ability of the players based on empirically observable information, such as the actions of the players, the game's outcome, strength of the players, and a reference oracle machine such as a chess-playing artificial intelligence system. Specifically, I introduce two main concepts: first, the \textit{Game Intelligence} (GI) mechanism, which quantifies a player's intelligence in a game by considering not only the game's outcome but also the ``mistakes'' made during the game according to the reference machine's intelligence. Second, I define \textit{gamingproofness}, a practical and computational concept of strategyproofness. To illustrate the GI mechanism, I apply it to an extensive dataset comprising over a billion chess moves, including over a million moves made by top 20 grandmasters in history. Notably, Magnus Carlsen emerges with the highest GI score among all world championship games included in the dataset. In machine-vs-machine games, the well-known chess engine Stockfish comes out on top. \textit{JEL Codes}: C72, C60, C81
\end{abstract}
\noindent \textit{Keywords}: Human intelligence, machine intelligence, $n$-person games, general-sum games, strategyproof mechanisms

\section{Introduction}
To win a chess game, top players do not always play the objectively best move. Instead, they often deviate from ``main lines'' to lure their opponents into uncharted territory, where human intelligence and creativity can shine. When both players follow established main lines, games often end in peaceful draws. However, deviating too far from the best move can be risky: a significant mistake greatly increases the chances of losing. Therefore, elite players, such as World No. 1 Magnus Carlsen, excel at balancing this trade-off, carefully weighing the risks of deviation against the potential rewards.

While intelligence has been extensively studied in psychology and computer science, it has received relatively little attention in economics, where there is a notable gap in the literature regarding a rigorous, game-theoretic framework for intelligence measurement. This paper addresses that gap by introducing mechanisms that assign a real number---interpreted as an intelligence score---to each player in a game. This score quantifies the ex-post strategic ability of the players based on empirically observable information, such as the actions of the players, the game's outcome, strength of the players, and a reference oracle machine such as a chess-playing artificial intelligence (AI) system.

I first introduce a novel framework for $n$-person general-sum games in which players may not possess full knowledge of certain aspects of the game, including actions, outcomes, and other players. As an example, playing an actual game of chess falls into this framework, even though the idealized version of chess is a game of perfect information. In this framework, I define a novel and practicable mechanism, called the \textit{Game Intelligence} (GI), that assigns an intelligence score to a player in a game by assessing not only the game's outcome but also the ``mistakes'' made relative to a reference machine's intelligence. Specifically, the machine assigns a probability distribution over the outcomes for each possible action a player might choose at any node. The GI mechanism is a function of the best moves predicted by the machine, the actual moves chosen by the player, moves of the other players, the game's outcome, and the Elo ratings of the players (if applicable). 

In the context of chess, the GI mechanism can be informally defined as:
\[
GI = f(\textit{Payoff}\,(\text{Moves}), \textit{Mistakes}\,(\text{Moves}, \text{Machine}, \text{Outcomes}), \text{Elo rating}),
\]
where $f$ is a real-valued function that is
\begin{enumerate}[(i)]
    \item increasing in the player's own payoff,
    \item decreasing in the magnitude of mistakes,
    \item increasing in the opponent's Elo rating.
\end{enumerate}
Put differently, the more points a player gains, the smaller the magnitude of mistakes made in the game, and the higher the opponent's Elo rating (if applicable), the greater the GI score. The function \textit{Payoff} outputs the player's payoff based on the sequence of moves played. \textit{Mistakes} is a real-valued function that measures the magnitude of the player's mistakes in the game. This measure depends on the chosen moves, a chess engine's evaluations, the results of engine self-play, and potential counterfactual outcomes.\footnote{In this paper, the term \textit{AI system} is used informally and broadly. The term \textit{machine} refers to a formalized version of an AI system, while \textit{engine} refers to a machine specifically in the context of chess.}

I illustrate the applicability of the GI mechanism by analyzing more than a billion chess moves played by humans and machines. Over a million of these moves are made by the world's top chess grandmasters. The dataset includes all world chess championship games from 1886 onwards. In this subset, Magnus Carlsen emerges with the highest GI score, despite not being the most accurate player according to top engine evaluations. When the analysis is extended to the larger dataset---which includes both world championship and elite tournament games---Bobby Fischer, Magnus Carlsen, and Garry Kasparov stand out as having the highest GI scores. The scores of these three grandmasters are statistically and practically indistinguishable from one another. 

With these findings, the GI mechanism arguably satisfies a rare combination of desirable properties: it has an intuitive theoretical definition with clear comparative statics implications, it is computationally feasible even for large datasets, and it produces empirically sound results. In addition to these main contributions of the paper, I briefly address three recurring theoretical questions concerning (i) the existence of plays with maximum intelligence scores, (ii) dynamic consistency, and (iii) strategyproofness.

First, I consider whether there always exists a play in $n$-person games with maximum intelligence scores, given a particular machine and mechanism. The answer is affirmative (Remark~\ref{rem:maximally_intelligent}); however, this result is largely of theoretical interest, as players typically do not have access to machines while they play games. Second, I define dynamic consistency, the property that a machine's evaluation remains unchanged after a move. While desirable in theory, this property is generally unattainable for finite machines in many real-world games. Nonetheless, dynamically consistent machines simplify computations for mechanisms like GI (Proposition~\ref{prop:GI_dynamic_consistency}).\footnote{I also explore a different concept of ``consistency'': a mechanism is called consistent if the intelligence score it produces does not conflict with either the game's outcome or the machine's evaluation. I show that both the GI mechanism and its variants satisfy this consistency notion (Proposition~\ref{prop:consistent}).} Finally, I define gamingproofness, a practical and computational version of strategyproofness, which holds if a player cannot increase her expected intelligence score by strategically choosing a suboptimal move. I show that the expected GI score is gamingproof under a natural assumption: that the player's estimation of the machine's deviation from her move is less accurate than her own evaluation of the deviation (Proposition~\ref{prop:gamingproofness}). Intuitively, this condition fails when the player either has access to a superior machine or has greater intelligence than the one used to evaluate her.

The rest of the paper is organized as follows. Section~\ref{sec:overview} presents an overview of the computational results. Section~\ref{sec:literature} reviews the relevant literature in psychology, computer science, and economics. Section~\ref{sec:setup} lays out the theoretical framework, while section~\ref{sec:theoretical_questions} addresses common theoretical questions. Section~\ref{sec:data} describes the data sources and computational methods and then reports the results. Finally, section~\ref{sec:conclusions} concludes.

\section{An overview of the computational results}
\label{sec:overview}

\begin{table}[t!]
\centering
\begin{tabular}{lcc}
\hline
\hline
Subset & Games & Moves \\ 
\hline
World Chess Championships & 2,006 & 86,737 \\
Top 20 players in history &  14,464 & 1,260,633 \\
Machine games &  1,743,565 & 226,470,027 \\
Regular human games &  20,157,126 & 1,343,129,524 \\
\hline
Total  & 21,915,155 & 1,570,860,184 \\
\hline
\end{tabular}
\caption{An overview of the dataset.}
\label{tab:overview_datasets}
\end{table}

Here, I present a summary of the computational results applied to more than one billion chess moves, including over a million moves made by the top 20 chess grandmasters in history. The full dataset, summarized by Table~\ref{tab:overview_datasets}, includes all official world chess championship games since 1886, numerous elite tournaments, regular human-vs-human games, and engine-vs-engine games.

To provide some background information, chess is a two-person game in which the players (White and Black) alternate turns, making moves until one player wins or the game ends in a draw. Chess moves can be analyzed using powerful AI systems, usually called engines, some of which are much stronger than the best human players. The analysis in this paper relies on evaluations by the leading chess AI, Stockfish.

GI scores are standardized so that the average is 100 and the standard deviation is 15. A higher GI score indicates a higher intelligence score. The GI score captures the trade-off between following the engine's top recommended move and deviating from it. A high GI score implies that the benefits of deviating from the engine's suggestions outweigh the potential costs. Conversely, a low GI score suggests that the player takes too many risks. 

Missed Points (MP) is a novel metric which is defined as the sum of a player's expected losses in a game. MP measures the difference between the perfect moves as determined by an engine and the actual moves made by the player. A lower MP suggests that a player's moves are closer to the engine's recommendations, meaning higher accuracy. Conversely, a higher MP indicates greater deviation from the engine's recommended moves. For example, an MP of 1.0 suggests that the deviations from perfect moves are equivalent to the loss of a game, considering that a win is worth 1 point. In contrast, an MP of 0 means that the player played a perfect game according to the engine. However, as mentioned in the Introduction, playing the top engine-recommended moves does not usually win games at the top level of chess. Since all elite players use engines to prepare, these top moves are commonly known, and they often lead to draws.

\subsection*{World Chess Championships}

\begin{table}[t!]
    \centering
\begin{tabular}{lcccc||lccccc}
\multicolumn{5}{c}{World Chess Championships} & \multicolumn{5}{c}{Top 20 grandmaster games} \\ \hline
Player & GI & MP & Games & Moves & Player & GI & MP & Games & Moves\\ \hline
Carlsen & 161 & 0.41 & 56 & 2958 & Fischer & 157 & 0.75 & 640 & 26,669 \\
Anand & 159 & 0.4 & 88 & 3323 & Carlsen & 157 & 0.66 & 1187 & 57,422 \\
Kasparov & 158 & 0.52 & 197 & 7076 & Kasparov & 157 & 0.71 & 1188 & 46,103 \\
$\vdots$ & $\vdots$ & $\vdots$ & $\vdots$ & $\vdots$ & $\vdots$ & $\vdots$ & $\vdots$ & $\vdots$ & $\vdots$ \\
Tal & 146 & 1.11 & 42 & 1906 & Cordoba & 143 & 1.07 & 72 & 2,938 \\
Euwe & 144 & 1.13 & 63 & 2709 & Vaganian & 143 & 1.15 & 54 & 2,383 \\
Steinitz & 143 & 1.3 & 115 & 4769 & Larsen & 140 & 1.23 & 80 & 3,084 \\
\hline
\end{tabular}
\caption{Summary of computational results for world chess championship games 1886--2023 (left) and top 20 grandmasters (right). The mean values of GI score and MP are presented.}
\label{tab:average_GMs_small}
\end{table}

Table~\ref{tab:average_GMs_small} (left) presents the highest GI scores of world chess champions from 1886 to 2023. Magnus Carlsen stands out with the highest GI score of 161, achieving this despite having a higher MP than Anand, who has the second highest average GI, closely followed by Kasparov. This suggests that Carlsen's playing style, while not always following the AI's optimal move, has a tendency to elicit more mistakes from his opponents compared to other players.\footnote{Unlike the GI scores for regular players, the GI scores for these grandmasters are not adjusted with respect to their opponents' Elo ratings, considering the widely recognized issue of rating inflation since the 1970s. This makes Carlsen's GI score even more impressive.}

William Steinitz, the first official world chess champion, has the lowest GI score, 143, closely followed by Max Euwe. They are also the two least accurate players among world champions. Their MPs indicate that their mistakes amount to slightly more than making a game-losing mistake on average.

\begin{figure}[t!]
    \centering
    \includegraphics[width=0.5\textwidth]{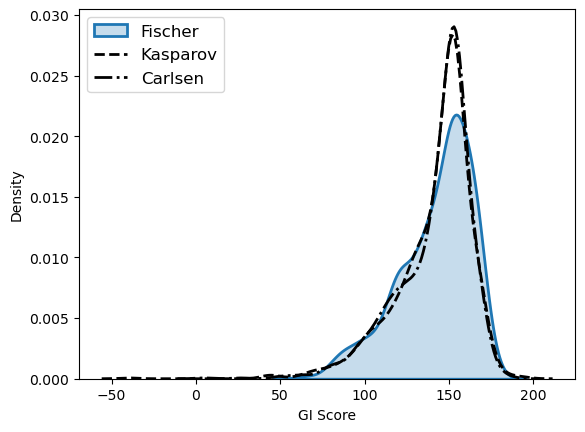}
    \caption{GI score distributions of Fischer, Carlsen, and Kasparov in elite competitions}
    \label{fig:fischer_carlsen_GIs}
\end{figure}

\subsection*{Top 20 grandmaster games}
Table~\ref{tab:average_GMs_small} (right) takes a broader perspective, summarizing the GI scores and MPs for ``super'' grandmasters (for more details, see Table~\ref{tab:average_GMs}). This dataset includes an analysis of over one million moves made in elite chess tournaments.

Fischer, Carlsen, and Kasparov stand out with the highest GI score of 157 (rounded). Their GI scores are statistically and practically indistinguishable. However, their MPs tell a different story, with Carlsen's MP (0.66) being lower than both Kasparov's (0.71) and Fischer's (0.75), suggesting that Carlsen's moves were closer to the AI's optimal choice. The differences in their MPs are statistically significant, as shown in Table~\ref{tab:mann-whitney_MP}. Interestingly, Fischer manages to secure the highest GI score, despite his relatively high MP. This suggests that even though Fischer's moves may not always coincide with the AI's top choice, they tended to provoke more mistakes from his opponents.

Figure~\ref{fig:fischer_carlsen_GIs} compares the GI score distributions for Fischer, Carlsen, and Kasparov in elite competitions. Carlsen's and Kasparov's distributions are remarkably similar. However, Fischer's distribution has a visibly different shape, which could be due to the fact that he stopped playing chess immediately after becoming a world champion.

\subsection*{Machine-vs-machine games}

The GI mechanism is also applicable for assigning intelligence scores to machines. Table~\ref{tab:average_engines} presents the top five GI scores of chess engines. These GI scores and MPs are computed from over 200 million moves in engine-vs-engine games, played under Computer Chess Rating Lists conditions. For details, see section~\ref{subsec:ccrl}.

The well-known engine Stockfish, when all of its versions combined, achieves the highest GI score of 172, followed by Komodo and Torch. Notice that the top engines each have a negative MP, which indicates that these engines are superior to the engines that evaluate them. In contrast, Clarity, the engine with the lowest GI score, has a positive MP, indicating that superior engines accurately identify Clarity's mistakes.

\begin{table}[t!]
\centering
\begin{tabular}{lcc||cccc}
\multicolumn{3}{c}{Best version of the engine} & \multicolumn{4}{c}{All versions combined} \\ \hline
Engine & GI & MP  & GI & MP  &  Games & Moves \\ \hline
\hline
Stockfish 150720 64-bit & 178 & -0.29 & 172 & -0.14  & 158,999 & 10,080,134 \\
Houdini 3 32-bit & 176 & -0.28  & 169 & -0.05  & 46,810 &  3,285,536 \\
Komodo 11.01 64-bit & 174 & -0.20 & 171 & -0.10  &  119,271 &  8,043,284 \\
Rybka 3 32-bit & 173 & -0.15  & 169 & -0.02  & 52,570 & 3,498,424 \\
Torch v1 64-bit & 171 & -0.09  & 171 & -0.09  & 3,307 & 216,237 \\
$\vdots$ & $\vdots$ & $\vdots$ & $\vdots$ & $\vdots$ & $\vdots$ & $\vdots$ \\
Clarity 2.0.0 64-bit & 140 & 0.28  & 157 & 0.24 & 1,297 & 72,938 \\
\hline
\end{tabular}
\caption{Summary of computational results for machine-vs-machine games. The mean values of GI score and MP are presented.}
\label{tab:average_engines}
\end{table}

\section{Related work}
\label{sec:literature}

The present study focuses on actual play in games, instead of complete plans of action (i.e., strategies). This approach is inspired by what can be called ``Bonanno's program'' (see, e.g., \citet{bonanno2013} and more recently \citet{bonanno2022}), which advocates a shift of emphasis towards describing actual play in games. While this paper draws inspiration from Bonanno's program, its framework is distinct, designed specifically to measure intelligence in games.

In games studied here, players do not have complete knowledge or information about aspects of the game, including other players, strategies, game tree, and outcomes. Although this study does not explicitly aim to model unawareness in the way it has been modeled, it does allow for the possibility that players may be---and usually are---unaware of certain aforementioned aspects of the game. For example, a player may be unaware of certain moves in a chess game. In that respect, I was inspired by the burgeoning body of research on unawareness that began with \citet{fagin1987}. Early studies on unawareness, such as those by \citet{fagin1987} and \citet{heifetz2006}, were not specifically based on game theory (though the latter study considers a multi-agent setting). However, game theory has been increasingly used in this field. For a review of the literature on unawareness, see \citet{schipper2014}.

In addition, there are game-theoretic frameworks, some of which were in part inspired by chess, which focus on equilibrium concepts and/or strategies~\citep{jehiel2007,halpern2014,ismail2025}. In contrast to these solution concepts and frameworks, the present study does not assume equilibrium, strategy, or rational play.

\subsubsection*{Backgammon, chess, and Go}
AI systems have been developed for games like backgammon and chess, including TD-Gammon \citep{tesauro1995} and GNU Backgammon for the former, and Deep Blue \citep{campbell2002} and Stockfish for the latter. These systems allow players to evaluate the quality of their moves and overall performance using metrics such as ``error rate'' in backgammon, and ``average centipawn loss'' and ``total pawn loss''~\citep{anbarci2024} in chess.

A centipawn is 1/100 of a pawn's value in chess and is used to evaluate a position by an engine. As is well-known, the main weakness of the pawn-based quality measures, such as the average centipawn loss metric, is that losing $c>0$ centipawns in a winning position is in general very different from losing $c$ centipawns in an equal position. However, traditional backgammon engines, some recent chess engines---such as Leela Chess Zero, which is inspired by AlphaZero \citep{silver2018}, and some Go engines such as AlphaGo Zero and its variants---adopt a different approach. These engines determine the best move in a position by maximizing the probability of winning a game, implicitly assuming that winning a game is worth 1 point and drawing is worth a half-win, thereby equating the expected score with the win probability. Unlike the earlier methods, MP is a metric that takes into account the relative weights of winning, drawing, and losing, and it applies to both zero-sum games and non-zero-sum games, where ``winning'' has no clear meaning.

A common limitation of the existing methods for analyzing games is that they are purely engine-based, where the evaluations of these engines assume that they play against themselves. Clearly, these metrics do not capture the ``human'' aspect of games. The main contribution of the GI score as compared to the existing literature is that it takes into account almost all empirically observable information in a game. This includes the game's outcomes, players' Elo ratings (where applicable), the relative weights of potential rewards, as well as the machine evaluations.

A considerable body of empirical research has used chess engines to assess the pawn-based quality of moves. While a comprehensive review of this literature is beyond the scope of this note, some selected works in the fields of computer science and economics include those by \citet{guid2006,regan2011,strittmatter2020,kunn2022,kunn2023,zegners2020,backus2023}.\footnote{The majority of these studies used the Stockfish chess engine, which only recently announced probabilistic evaluations in its latest version, released in December 2022.} Chess can serve as a valuable framework for examining cognitive performance and gender differences in competitiveness. For instance, a recent study by \citet{kunn2022} discovered a decline in performance when players competed remotely compared to in-person. In another example, \citet{sousa2022,sousa2023, backus2023} study gender differences in chess competitions.

\subsubsection*{Human and machine intelligence}
For over a century, psychometric tests have been widely used in psychology to evaluate an individual's cognitive abilities \citep{spearman1904}. These assessments typically measure skills such as verbal, mathematical, and spatial reasoning. For the literature on intelligence measurement in both human and artificial agents, refer to \citet{bartholomew2004,goertzel2007,bostrom2014,schaul2011,hernandez2017}.

Measuring machine intelligence has a long history in AI research. The Turing Test \citep{turing1950} is a classic method for evaluating intelligence through natural language conversations, while the Universal Intelligence Measure \citep{legg2007} offers a formal definition of intelligence and a quantitative measure based on Kolmogorov complexity and AIXI \citep{hutter2005}. Additionally, some alternatives have been proposed, such as the C-test and the anytime universal intelligence test, which aim to assess various aspects of intelligence \citep{hernandez1998,hernandez2010}.

While the aforementioned theoretical works define machine intelligence in terms of average performance (i.e., rewards) across various environments, my definition of game intelligence for a player encompasses not only rewards and opponents, but also the evaluation of the player's actions by a reference machine's intelligence. The novelty of the GI score is that it combines these elements in a unique way that has a clear interpretation, where players in a game can be ranked based on their GI score. Like many intelligence tests, GI can be applied to both humans and machines.

\subsubsection*{Strategyproofness and gamingproofness}

In strategic games and mechanism design, a key question is how resistant the mechanism is to manipulation. The concept of gamingproofness proposed in this paper is related to the concept of strategyproofness studied in social choice, mechanism design, and sports. Selected works that define strategyproofness include \citet{elkind2005,faliszewski2010,li2017,aziz2018,brams2018c} and \citet{pauly2013}. 

I have introduced gamingproofness as a simple computational version of strategyproofness for $n$-person games in this paper. Gamingproofness differs from the well-known notions of strategyproofness in that it uses the notion of play rather than a strategy profile as a primitive. In essence, a mechanism is gamingproof if players cannot increase their intelligence score by deliberately making a suboptimal move from their own perspective. If a player has a greater computing power or intelligence than the machine used to measure the player's intelligence, then the mechanism is typically not gamingproof. If a mechanism satisfies gamingproofness, then the intelligence scores become more reliable. Without gamingproofness, one might question whether the data truly represents a player's skill, or if the player is simply trying to artificially inflate their perceived intelligence, \`a la Goodhart's law.

\subsubsection*{Bounded rationality}
As mentioned above, the primary aim of this paper is to assess player intelligence in $n$-person games using game outcomes and evaluations from machines. While players may have varying degrees of rationality, my focus is not on classifying them as rational, boundedly rational, or irrational. Instead, I concentrate on evaluating their intelligence given their information and cognitive constraints. Bounded rationality has been a topic of study for a long time, but the concept has no single definition \citep{simon1955}. \citet{rubinstein1998} provides a comprehensive overview of the subject in his textbook. In this paper, the players could be considered ``boundedly rational.'' They have incomplete knowledge of important game aspects, such as the game tree and possible outcomes, and their limited foresight prevents them from performing full backward induction. Despite these limitations, some players in my setup can also be considered ``rational'' in some sense. They make the best decisions they can, given their information, environment, time, and computational constraints.

\section{The theoretical framework}
\label{sec:setup}
\subsection{Games}

Table~\ref{table:terminology} introduces the basic notation I use in this paper.
Let $\Gamma=(N, H, X, Z, I, \boldsymbol{\bar{A}}, R)$ be a game. $N=\{1,2,\dots,n\}$ denotes the finite set of players where $n\geq 1$, $X$ a finite game tree (i.e., the set of nodes), $x_0$ the root of the game tree, $h\in H$ an information set, $Z\subset X$ the set of terminal nodes, $I$ the player function, $R_i: Z\rightarrow \mathbb{R}$ the payoff or ``reward'' function of player $i\in N$.\footnote{The reward function could be a von Neumann-Morgenstern utility function \citep{neumann1944} if it is observable or measurable in the context of the game being played. In that case, the risk attitude of the players would be captured in these utilities.} I do not assume any specific form (e.g., zero-sum) for the reward function.\footnote{Unlike a standard extensive form game with imperfect information, $\Gamma$ does not contain strategy profiles and does not require knowledge of the aspects of the game, as I discuss below. For a standard textbook on game theory, see, e.g., \citet{osborne1994}.}

 \begin{table}[htb!]
 \centering
\begin{tabular}{|c|c|c|}
\hline
\textbf{Name} & \textbf{Notation} & \textbf{Element} \\
\hline
Players & $N = \{1, 2, \dots, n\}$ & $i$ \\
%\hline
Information sets & $H$ & $h$ \\
%\hline
Nodes/positions & $X$ & $x$ \\
%\hline
Terminal nodes & $Z$ & $z$ \\
%\hline
Player function & $I : H \to N$ & \\
%\hline
Actions at node $x$ & $A_i(x)$ & $a_i(x)$ \\
%\hline
Subplays & $\bar{A}$ & $\bar{a}$ \\
Plays & $\boldsymbol{\bar{A}} \subseteq \bar{A}$ & $\boldsymbol{\bar{a}}$ \\

%\hline
Outcomes/rewards & $R = \{r_1, r_2, \dots, r_{t}\}$ & $r_j \in \mathbb{R}$ \\
%\hline
Player $i$'s reward from $\boldsymbol{\bar{a}}$ & $R_i(\boldsymbol{\bar{a}})$ & \\
%\hline
Probability distribution over $R$ & $\Delta R$ & $p$ \\
%\hline
Machines & $\mathcal{M}$ & $M$ \\
%\hline
$i$th component of machine $M$ & $M_i : X \to \Delta R$ & \\
%\hline
Machine-optimal action & $a^*_i(x)$ & \\
%\hline
%\hline
Game & $\Gamma = (N, H, X, Z, I, \boldsymbol{\bar{A}}, R)$ & \\
$i$th component of mechanism $\mu$ & $\mu_i : \bar{A} \times \mathcal{M} \to \mathbb{R}$ & \\

\hline
\end{tabular}
\caption{Notation for the game framework}
\label{table:terminology}
\end{table}

$I:H \to N$ denotes the player function, where $I(h)$ gives the ``active'' player who moves at information set $h$. For each player $i$, $A_i(h)$ represents the set of possibly mixed actions (i.e., behavior strategies) over the finite set of pure actions at information set $h$. For convenience, the following notational conventions are used throughout the text. $I(x)$ and $A_i(x)$ refer to $I(h)$ and $A_i(h)$, respectively, whenever $x\in h$. In addition, $a_i(x)$ refers to both the action $a_i$ taken at node $x$ and to the successor node reached by taking that action. Since each action leads to a unique successor node, and for every node other than $x_0$ there is exactly one action leading to it, this convention does not cause ambiguity.

For a node $x_m \in X$, the \textit{path} (of play) between $x_0$ and  $x_m$ is denoted by
\[
[x_m]=\{x_0, x_1, x_2, \dots, x_m\},
\]
where for every $j=0,1,\dots,m-1$, $x_{j+1}$ is an immediate successor of $x_{j}$. Let $[x_m]_i\subseteq [x_m]$ denote the set of nodes at which player $i$ is active. 

For a node $x_m$, a \textit{subplay} between $x_0$ and $x_m$ is denoted by $\bar{a}$, which is a sequence of actions $a_i(x)$ chosen by player $i=I(x)$ at node $x$ on the path $[x_m]$. A subplay $\bar{a}$ is called a \textit{play} and denoted by $\boldsymbol{\bar{a}}$ when $x_m$ is a penultimate node, i.e., the final decision node prior to a terminal node. Let $\bar{A}$ denote the set of all subplays and $\boldsymbol{\bar{A}} \subseteq \bar{A}$ the set of all plays.

The finite set of outcomes or rewards in the game is denoted by $R=\{r_1, r_2, \dots, r_{t}\}$, where $r_j\in \mathbb{R}$ is a typical element, with $t$ being the total number of outcomes. For example, in chess, there are three possible outcomes: win, draw, and loss, with their corresponding values being 1, 1/2, and 0, respectively. However, the value of an outcome is not necessarily common among the players. With slight abuse of notation, the function $R_i(\boldsymbol{\bar{a}})$ denotes player $i$'s outcome resulting from play $\boldsymbol{\bar{a}}$, as each play uniquely determines a terminal node, and each terminal node is associated with a unique play. Let $\Delta R$ denote the set of all probability distributions over the set of outcomes $R$ and $p\in \Delta R$ be a probability distribution: that is, $\sum^{t}_{j=1}p(r_j)=1$, and for all $j$, $p(r_j)\geq 0$. A player may have a probability distribution over a proper subset $R'$ of the set of outcomes $R$, partly because the player might not know the complete set of outcomes. With a slight abuse of notation, this is captured by a probability distribution $p\in \Delta R$ such that none of the elements in $R\setminus R'$ are in the support of $p$.

\subsubsection*{Machines}
Define a (stochastic) \textbf{machine} $M_i$, $i\in N$, as a function $M_i: X\rightarrow \Delta R$ such that for every information set $h\in H$ and every $x$ and $y$ in $h$, $M_i(x) = M_i(y)$. Let $M$ denote a profile of machines.\footnote{I do not assume that machine $M_i$ is a player, but one can define a player that always plays according to $M_i$'s optimal choice.}

In simple terms, a machine yields a probability distribution, $M_i(x)$, over a subset of outcomes $R$ for node $x\in X$ in a game. For a node $x\in X$, $M_i(x)(r_j)$ denotes the machine $M_i$'s probability mass on the outcome $r_j\in R$ where $j\in\{1,2,\dots,t\}$. Let $\mathcal{M}$ be the set of all machines. 

Note that the machine function does not necessarily ensure that the probability distribution $M_i(x)$ is ``consistent'' with the potential outcomes given node $x$. Consequently, certain machines may assign a probability distribution to some outcomes that seem unreasonable or counterintuitive. An example of a machine is a chess engine which takes in a position and outputs a probability distribution associated with the position.

Note also that a machine is an abstract term that can refer to a human player, denoted as $M^h_i$, in a game. This can be interpreted as follows: For any node $x\in X$, a human player has a \textit{prior} (i.e., probability distribution) $M^h_i(x)\in \Delta R$. 

In the context of games where a specific game-theoretic solution concept is well-defined, the solution (i.e., a specific strategy profile) may be interpreted as a machine.\footnote{Note that classical solution concepts such as Nash equilibrium are not, in general, applicable to the games studied here. In a sense, a machine is conceptually a generalization of a solution concept.} For example, in a standard extensive form game, a Nash equilibrium would induce a probability distribution in every information set.  These induced distributions would then define a machine in that specific game.

\begin{definition}[Machine-optimal actions]
An action $a^*_i(x)$ is called \textbf{machine-optimal} with respect to machine $M_i$ if it satisfies the following condition:
\[
a^*_i(x)\in \arg\max_{a_i(x)\in A_i(x)} \sum^{t}_{j=1} r_j M_i(a_i(x))(r_j).
\]
\end{definition}
In simple words, a machine-optimal action at a node is the action with the highest expected value according to machine $M_i$.\footnote{As noted earlier, the notation $a_i(x)$ refers to both the action taken at node $x$ and the (possibly terminal) node that the action leads to.} A \textbf{human-optimal} action is simply an action that is machine-optimal with respect to $M^h_i$. In other words, a human-optimal action gives the highest subjective evaluation at a position with respect to the human's prior.

\subsubsection*{Players' knowledge} Here, I do not assume that players have common knowledge of each other's priors. Specifically, given an information set $h\in H$, player $i=I(h)$ knows that it is his or her turn to choose an action, but may not know all the elements of the information set $h$, nor the set of all available actions, $A_i(h)$. Players also do not necessarily know their own or other players' future available actions, or the outcome of every play. As noted above, a player might not know the complete set of possible outcomes in a position.

Players' knowledge about the game, including actions, outcomes, and other players, is captured by their subjective ``evaluation.'' Formally, player $i$'s \textbf{evaluation} of action $a_i(x)$ at node $x$ with respect to machine $M_i$ is defined as:
\begin{align*}
EV_i(a_i(x), M_i)=\sum^{t}_{j=1}r_j M_i(a_i(x))(r_j). 
\end{align*}

In this framework, the question of whether players have perfect recall is not directly relevant. Note that this does not negate the potential benefits a player with perfect recall may have within a game. However, the mechanisms in focus here operate from an ex-post perspective, similar to the post-game analysis done in chess. 

\subsection{Mechanisms}
\label{subsec:mechanisms}
Fix a game $\Gamma$. A (stochastic) \textbf{mechanism} is a profile $\mu$ of functions whose $i$th component is defined as $\mu_i:\bar{A} \times \mathcal{M}  \rightarrow \mathbb{R}$. For a given machine $M\in \mathcal{M}$, player $i$, and subplay $\bar{a}$, the function $\mu_i$ outputs a real number. The interpretation of $\mu_i$ is that it is a measure of human intelligence in reference to a machine's intelligence. Put differently, $\mu_i(\bar{a},M)$ assigns an \textbf{intelligence score} (i.e., a real number) to player $i$ for subplay $\bar{a}$ given reference machine $M$.\footnote{When Elo ratings are available, denoted by the vector $\mathcal{E} = (e_1, e_2, \dots, e_n)$ with $e_i \in \mathbb{R}_+$, one may consider rating-dependent mechanisms, denoted by $\mu_i^{\mathcal{E}}$. These mechanisms could incorporate player strength into the measurement of intelligence and will be used in the empirical analysis presented later in the paper.}

\subsubsection*{Practicable mechanisms}
The aim of this section is to define practicable mechanisms that are computationally feasible to implement on large-scale datasets, such as the one introduced in the Introduction. I begin by introducing the concept of ``missed points.''

\begin{definition}[Missed points]
Let $\bar{a}\in \bar{A}$ be a subplay and $[x_m]_i$ be its path comprising only the nodes at which $i$ is active. For node $x\in[x_m]_i$, let $a_i(x)$ be $i$'s action in subplay $\bar{a}$ and $a^*_i(x)$ be a machine-optimal action.  

Then, player $i$'s \textbf{missed points} (MP) is defined as
\[
MP_i(\bar{a}, M)=\sum_{x\in[x_m]_i} \left(EV_i(a^*_i(x), M_i) - EV_i(a_i(x), M_i)\right).
\]
\end{definition}

In words, the MP of player $i$ in subplay $\bar{a}$ is calculated by summing over each node $x$ the expected loss from choosing action $a_i(x)$. This expected loss is the difference between the evaluation of the best action $a^*_i(x)$ and that of the action actually chosen $a_i(x)$, according to the reference machine $M_i$. Thus, MP quantifies the total expected loss from suboptimal moves made by player $i$ during the subplay. The MP mechanism therefore is a measure of game accuracy relative to the reference machine.

We are now ready to define the following mechanism.

\begin{definition}[Raw Game Intelligence]
\label{def:game_intelligence}
Let $\boldsymbol{\bar{a}}\in \boldsymbol{\bar{A}}$ be a play. Then, the raw \textbf{game intelligence} (GI) score of $\boldsymbol{\bar{a}}$ for player $i$ is a mechanism $GI_i:\boldsymbol{\bar{A}} \times \mathcal{M}  \rightarrow \mathbb{R}$
defined as 
\[
GI_i(\boldsymbol{\bar{a}}, M) = R_i(\boldsymbol{\bar{a}}) - MP_i(\boldsymbol{\bar{a}}, M).
\]
\end{definition}

Recall that a subplay on the path $[x_m]$ is called a play if $x_m$ is a penultimate node. The raw GI of a play $\boldsymbol{\bar{a}}$ is calculated as the difference between a player's payoff, denoted $R_i(\boldsymbol{\bar{a}})$, and his or her MP. The intuition behind this mechanism is that it captures the trade-off between a player's actual gain and the expected value they missed during the game. 

For notational convenience, when a machine $M$ is fixed in a game $\Gamma$, $GI_i(\boldsymbol{\bar{a}})$ and $MP_i(\boldsymbol{\bar{a}})$ will denote $GI_i(\boldsymbol{\bar{a}}, M)$ and $MP_i(\boldsymbol{\bar{a}}, M)$, respectively. 

When more empirical information is available, one may consider a weighted variation of this mechanism. For example, if Elo ratings of players are known, the raw GI score can be adjusted accordingly. In general, the raw GI score is standardized to produce the GI score, as described in section~\ref{sec:population_average}. Note that the raw GI mechanism is, for ease of notation, purposefully defined only over plays. Its definition can, however, be extended to subplays as follows.

The following notion of ``expected'' game intelligence will be used in section~\ref{subsec:gamingproofness} on gamingproofness.

\begin{definition}[Expected GI]
\label{def:expectedGI}
Fix a game $\Gamma$ and a machine $M$. Let $\bar{a}$ be a subplay on the path $[x_m]$, and let player $i=I(x_m)$ be the active player at $x_m$. Then, the \textbf{expected game intelligence} (EGI) of $\bar{a}$ for player $j\in N$ is defined as 
\[
EGI_j(\bar{a}, M) = EV_j(a_i(x_m), M^h_j) - MP_j(\bar{a}, M).
\]
\end{definition}

The only difference between the EGI mechanism and the raw GI is that EGI replaces player $j$'s actual payoff, which may not have materialized because the game may not yet have finished, with the expected payoff $EV_j(a_i(x_m), M^h_j)$ at the node $a_i(x_m)$. This is calculated with respect to player $j$'s own prior $M^h_j(a_i(x_m))$ at $a_i(x_m)$.  The MP term is evaluated with respect to the machine $M$ as usual.

\section{Response to common theoretical questions}
\label{sec:theoretical_questions}
This section addresses several recurring theoretical questions that have emerged throughout the development and presentation of this paper.

\subsection{Machine-optimality vs. maximal intelligence}

Here, I discuss whether and under which conditions machine-optimal actions and ``maximally intelligent'' plays exist. 

\begin{remark}[Existence of machine-optimal actions]
    \label{rem:machine_optimality}
In every $n$-person game $\Gamma$, at every non-terminal node $x$ and for every machine $M$, there exists a machine-optimal action.
\end{remark}

\begin{proof}
Let $x$ be a node and $i=I(x)$. The following set,
\[
\arg\max_{a_i(x)\in A_i(x)} \sum^{t}_{j=1} r_j M_i(a_i(x))(r_j),
\]
is always nonempty because it maximizes a continuous expected value function over a compact set of actions.
\end{proof}

An important theoretical question is whether there exists a play with the ``highest'' intelligence score, hence the following definition.

\begin{definition}[Maximal intelligence]
\label{def:maximal_intelligence}
Fix a game $\Gamma$, a machine $M$, and a mechanism $\mu$. Let $\bar{a}$ be a subplay on the path $[x_m]$ and let player $i=I(x_m)$ be the active player at $x_m$. Player $i$'s action $a_i(x_m)$ in subplay $\bar{a}$ has \textit{maximal intelligence} at $x_m$ if for every subplay $\bar{a}'$ on the path $[x_m]$ that differs from $\bar{a}$ only at node $x_m$, the following inequality is satisfied:
\[
\mu_i(\bar{a},M)\geq \mu_i(\bar{a}',M).
\]
A subplay $\bar{a}$ has \textbf{maximal intelligence} if for every node $y\in [x_m]$, where $I(y)=j$, player $j$'s action $a_j(y)$ has maximal intelligence.
\end{definition}

Analogous to machine-optimality, an action with maximal intelligence is the one that maximizes the active player's intelligence score at that node. However, maximally intelligent and machine-optimal actions need not coincide even when the machine is stronger than the players themselves. As noted earlier, top chess players often deviate from the machine's recommended move to gain an advantage against their opponents.  

The following remark is immediate. 

\begin{remark}[Existence of maximally intelligent plays]
\label{rem:maximally_intelligent}
In every $n$-person game $\Gamma$ with perfect information, for every machine $M$ and every mechanism $\mu$, there exists a play that has maximal intelligence.
\end{remark}

This remark can be extended to games with imperfect information with the additional assumption of continuity of the mechanism in the actions chosen at each node.

While it is theoretically possible to find maximally intelligent plays in games, this approach is generally impractical in real-world human competitions. This is because players do not typically have knowledge of the machine and the mechanism used to calculate the intelligence score.

There is also a more demanding way to define maximal intelligence in games. Informally, a play ``maximizes intelligence'' if, at every node, players each choose an action that maximizes their intelligence score assuming that every player will do the same at all future nodes. This definition is in the spirit of subgame perfect Nash equilibrium, and in games with perfect information, its existence can be established using a backward induction argument. However, applying this concept in practice would not only require knowledge of the machine and the mechanism but also knowledge of the entire game tree. Yet, players and even machines often lack such complete knowledge.
 
\subsection{Gamingproofness}
\label{subsec:gamingproofness}

A common theoretical question in mechanism design is to consider whether players can manipulate the mechanisms. To address this potential concern and capture a simple and computational notion of strategyproofness suitable for this framework, I next define the concept of ``gamingproofness.''

Let $\hat{M}^h_i: X\rightarrow \Delta R$ denote a \textit{hypothetical machine} where $\hat{M}^h_i(x)$ is the probability distribution that human player $i$ believes machine $M_i$ assigns at node $x$. Note that $\hat{M}^h_i$ differs from $M^h_i(x)$, which denotes a human player's prior at $x$.

\begin{definition}[Gamingproofness]
\label{def:gamingproof}
Let $\bar{a}$ be a subplay on the path $[x_m]$ such that $a_i(x_m)$, where $i=I(x_m)$, is human-optimal at node $x_m$. A mechanism $\mu$ in a game $\Gamma$ is called \textbf{gamingproof} for player $i$ at node $x_m$ if player $i$'s action $a_i(x_m)$ has maximal intelligence with respect to the hypothetical machine $\hat{M}^h_i$. 

A mechanism $\mu$ is called \textit{gamingproof} if it is gamingproof for every player $i$ at every node $x\in X$ such that $I(x)=i$.
\end{definition}

A gamingproof mechanism ensures that at every node all players' human-optimal actions must have maximal intelligence with respect to their belief about the reference machine. In other words, no player can increase his or her intelligence score by intentionally making a suboptimal move, as viewed from their own perspective.

\begin{proposition}
\label{prop:gamingproofness}

Fix a game $\Gamma$ and a machine $M$. Let $\bar{a}$ be a subplay on the path $[x]$ such that $a_i(x)$, where $i=I(x)$, is human-optimal at node $x$.

The EGI mechanism is gamingproof for player $i=I(x)$ at node $x\in X$, if for every $a'_i(x)$ the following inequality holds:
\[
EV_i(a_i(x), M^h_i) - EV_i(a'_i(x), M^h_i)  \geq  EV_i(a'_i(x), \hat{M}^h_i) - EV_i(a_i(x), \hat{M}^h_i).
\]
\end{proposition}

\begin{proof}
By definition, the EGI mechanism is gamingproof for player $i$ at node $x$ if the human-optimal action $a_i(x)$ has maximal intelligence with respect to the hypothetical machine $\hat{M}^h_i$. Formally, for every subplay $\bar{a}'$ on the path $[x]$ that differs from $\bar{a}$ only at node $x$, the following must be satisfied:
\[
EGI_i(\bar{a}, \hat{M}^h_i)\geq EGI_i(\bar{a}', \hat{M}^h_i).
\]
By definition of EGI, this inequality holds if and only if
\[
EV_i(a_i(x), M^h_i) - MP_i(\bar{a}, \hat{M}^h_i) \geq
EV_i(a'_i(x), M^h_i) - MP_i(\bar{a}', \hat{M}^h_i).
\]

Observe that the MP of player $i$ before node $x$ is the same in the calculation of $EGI_i(\bar{a}, \hat{M}^h_i)$ and $EGI_i(\bar{a}', \hat{M}^h_i)$. Therefore, the inequality can be written as
\begin{align*}
& EV_i(a_i(x), M^h_i) - EV_i(a^*_i(x), \hat{M}^h_i) + EV_i(a_i(x), \hat{M}^h_i) \\
& \geq EV_i(a'_i(x), M^h_i) - EV_i(a^*_i(x), \hat{M}^h_i) + EV_i(a'_i(x), \hat{M}^h_i),
\end{align*}
where $a^*_i(x)$ is the machine-optimal action at $x$ with respect to the hypothetical machine $\hat{M}^h_i$. The inequality further simplifies to
\[
EV_i(a_i(x), M^h_i) -EV_i(a'_i(x), M^h_i)  \geq  EV_i(a'_i(x), \hat{M}^h_i) - EV_i(a_i(x), \hat{M}^h_i),
\]
as desired.
\end{proof}

The intuition behind Proposition~\ref{prop:gamingproofness} is based on the relationship between the players' evaluation of a particular position and their belief about the machine's evaluation of the same position. Consider a player's deviation $a'_i(x)$ from the human-optimal move $a_i(x)$ at position $x$. If, according to the player's belief $\hat{M}^h_i$, the machine's evaluation of this deviation is less than or equal to the player's own evaluation of the deviation, then the EGI mechanism is gamingproof at $x$.

In other words, a human player cannot strategically play a suboptimal move to increase their intelligence score, provided that their own evaluation of the position is superior to what they perceive as the machine's evaluation. This premise holds true only if the human's judgment is more accurate than their own approximation of the machine's judgment.

\subsection{Dynamic consistency of machines}

A common theoretical question I have received is what happens when a machine's evaluations are ``dynamically consistent.'' Here, I explore this question in the setting of two-player ``constant-sum alternating-move'' games. A game is constant-sum if, for every play $\boldsymbol{\bar{a}}\in \boldsymbol{\bar{A}}$, $R_1(\boldsymbol{\bar{a}})+R_2(\boldsymbol{\bar{a}})=c$, for some constant $c$. A two-player alternating-move game is one in which players take turns making moves in alternating order, starting from player 1.

\begin{definition}
    \label{def:dynamic_consistency}
    Let $\Gamma$ be a two-player constant-sum game.
    A machine $M$ on $\Gamma$ is called \textbf{dynamically consistent} if, for every node $x\in X$, the following conditions are satisfied.
    \begin{enumerate}
        \item For every player $i$ and every machine-optimal action $a^*_i(x)$, $M_i(x)=M_{i}(a^*_i(x))$.
        \item 
        \[
        \sum_{i\in N} EV_i(x,M_i) = c.
        \]
    \end{enumerate}
\end{definition}
Dynamic consistency requires that a machine's evaluation of a position remain unchanged after the machine's own recommended action is taken. Moreover, the machine's evaluation of a position should be consistent with the eventual outcome of the game. For example, in chess, if a machine evaluates a position as a win for both White and Black, it would not be dynamically consistent.

While dynamic consistency sounds like a desirable theoretical property, in my opinion, it is neither a necessary nor a sufficient condition for a ``good'' machine. For instance, a machine could mistakenly evaluate each position in a chess game as a win for White and loss for Black and still be dynamically consistent. On the other hand, Stockfish is an example of a machine that achieves superhuman performance despite not being dynamically consistent. Nevertheless, if a machine is dynamically consistent, then the relationship between the players' GI scores becomes simpler, as shown next.

\begin{proposition}[GI and dynamic consistency]
    \label{prop:GI_dynamic_consistency}
Let $\Gamma$ be a two-person alternating-move constant-sum game and fix a machine $M$. Let $\boldsymbol{\bar{a}}\in\boldsymbol{\bar{A}}$ be a play and $[x_m]$ be its path.

\begin{enumerate}
    \item If $[x_m]$ has an even number of nodes:
\begin{equation}
\label{eq:GI1}
GI_1(\boldsymbol{\bar{a}})= GI_2(\boldsymbol{\bar{a}})+2R_1(\boldsymbol{\bar{a}}) - EV_1(a^*_1(x_0), M_1) - EV_2(a_2(x_{m}), M_2) 
\end{equation}
    \item  If $[x_m]$ has an odd number of nodes:
\begin{equation}
\label{eq:GI2}
    GI_1(\boldsymbol{\bar{a}})= GI_2(\boldsymbol{\bar{a}})+R_1(\boldsymbol{\bar{a}})-R_2(\boldsymbol{\bar{a}})- EV_1(a^*_1(x_0), M_1) + EV_1(a_1(x_{m}), M_1)
\end{equation}
\end{enumerate}
\end{proposition}

\begin{proof}
Dynamic consistency of $M$ and constant-sum property of $\Gamma$ 
imply that for every node $x\in [x_m]_1$ on the path of $\boldsymbol{\bar{a}}$ at which player 1 is active, 
\[
EV_1(a_1(x), M_1) + EV_2(a_1(x), M_2) =  R_1(\boldsymbol{\bar{a}})+R_2(\boldsymbol{\bar{a}}). 
\]
By part (1) of dynamic consistency, $EV_2(a_1(x), M_2) = EV_2(a^*_2(a_1(x)), M_2)$.\footnote{Recall that $a_1(x)$ refers to both player 1's action at node $x$ and the successor node the action leads to. Thus, $a^*_2(a_1(x))$ denotes player 2's machine-optimal action at the node reached after player 1 takes action $a_1(x)$.} Thus, 
\begin{equation}
\label{eq:GI3}
EV_1(a_1(x), M_1) + EV_2(a^*_2(a_1(x)), M_2) =  R_1(\boldsymbol{\bar{a}})+R_2(\boldsymbol{\bar{a}}). 
\end{equation}
Analogously, dynamic consistency implies that for every node $y\in [x_m]_2$,
\begin{equation}
\label{eq:GI4}
EV_2(a_2(y), M_2) + EV_1(a^*_1(a_2(y)), M_1) =  R_1(\boldsymbol{\bar{a}})+R_2(\boldsymbol{\bar{a}}). 
\end{equation}

Recall that $GI_i(\boldsymbol{\bar{a}}) = R_i(\boldsymbol{\bar{a}}) - MP_i(\boldsymbol{\bar{a}})$. By definition of the GI scores:
 \[
GI_1(\boldsymbol{\bar{a}})=R_1(\boldsymbol{\bar{a}}) -\sum_{x\in [x_m]_1} \Bigl(EV_1(a^*_1(x), M_1) - EV_1(a_1(x), M_1)\Bigr).
\]
and
 \[
GI_2(\boldsymbol{\bar{a}})=R_2(\boldsymbol{\bar{a}}) -\sum_{x\in [x_m]_2} \Bigl(EV_2(a^*_2(x), M_2) - EV_2(a_2(x), M_2)\Bigr).
\]
Subtracting the second equation from the first leads to:
 \[
GI_1(\boldsymbol{\bar{a}})=GI_2(\boldsymbol{\bar{a}})+R_1(\boldsymbol{\bar{a}}) -R_2(\boldsymbol{\bar{a}})
\]
\[
+\sum_{x\in [x_m]_2} \Bigl(EV_2(a^*_2(x), M_2) - EV_2(a_2(x), M_2)\Bigr)
\]
\[
-\sum_{x\in [x_m]_1} \Bigl(EV_1(a^*_1(x), M_1) - EV_1(a_1(x), M_1)\Bigr).
\]

By Equations~\ref{eq:GI3} and Equation~\ref{eq:GI4}, the terms $EV_1(a_1(x), M_1) + EV_2(a^*_2(a_1(x)), M_2)$ and $-EV_2(a_2(a_1(x)), M_2) - EV_1(a^*_1(a_2(a_1(x))), M_1)$ cancel each other out. There are two cases to consider.

If $[x_m]$ has an even number of nodes, player 2 is active at node $x_m$. Then, the remaining terms in the two sums, after the cancellation, simplify to: 
\[
- EV_1(a^*_1(x_0), M_1) - EV_2(a_2(x_m), M_2) + EV_1(a_1(x_{m-1}), M_1) + EV_2(a^*_2(x_m), M_2).
\]
Since $EV_1(a_1(x_{m-1}), M_1) + EV_2(a^*_2(x_m), M_2) = R_1(\boldsymbol{\bar{a}})+R_2(\boldsymbol{\bar{a}})$,  
\[
GI_1(\boldsymbol{\bar{a}}) = GI_2(\boldsymbol{\bar{a}}) + 2R_1(\boldsymbol{\bar{a}}) - EV_1(a^*_1(x_0), M_1) - EV_2(a_2(x_m), M_2).
\]
This is Equation~\ref{eq:GI1}. 

If $[x_m]$ has an odd number of nodes, player 1 is active at node $x_m$. Then, the remaining terms in the two sums above, after the cancellation, are: 
\[
- EV_1(a^*_1(x_0), M_1) + EV_1(a_1(x_{m}), M_1).
\]
Thus, Equation~\ref{eq:GI2} is obtained in this case.   
\end{proof}

These formulas show that, under dynamic consistency, the GI score difference can be computed directly from the results of the game, the evaluation of the first machine-optimal move, and the evaluation at the terminal node. This simplification eliminates the need to compute every intermediate evaluation along the path of play.

That said, dynamic consistency does not appear to be a desirable property in the real world. In fact, any reasonable machine with finite computing power must be dynamically inconsistent in some game, since it is always possible to design a game whose solution requires more computing power than the machine possesses. If the machine were dynamically consistent in such a game, then its evaluation at the root would be essentially ``arbitrary.''

\subsection{Consistency of mechanisms}
As mentioned above, it may be an unrealistic expectation for machines to be dynamically consistent in real-world settings. However, it is more feasible to require mechanisms to maintain a less demanding level of ``consistency'' that aligns with the outcomes of plays and machine evaluations.

In what follows, I provide a formal definition of consistency in this framework. 

\begin{definition}[Consistency]
\label{def:consistency}
Fix a game $\Gamma$ and a machine $M$. A mechanism $\mu$ is called \textbf{consistent} if for every player $i$ and every pair of plays $\boldsymbol{\bar{a}}$ and $\boldsymbol{\bar{b}}$ in $\boldsymbol{\bar{A}}$ the following three conditions are satisfied.
\begin{enumerate}
\item $R_i(\boldsymbol{\bar{a}})\geq R_i(\boldsymbol{\bar{b}})$ and $MP_i(\boldsymbol{\bar{a}}) \leq  MP_i(\boldsymbol{\bar{b}})$ implies that  $\mu_i(\boldsymbol{\bar{a}}, M_i) \geq \mu_i(\boldsymbol{\bar{b}}, M_i)$. 
\item Assume  $MP_i(\boldsymbol{\bar{a}})=MP_i(\boldsymbol{\bar{b}})$. $R_i(\boldsymbol{\bar{a}})\geq R_i(\boldsymbol{\bar{b}})$ if and only if $\mu_i(\boldsymbol{\bar{a}}, M_i)\geq \mu_i(\boldsymbol{\bar{b}}, M_i)$.
\item Assume $R_i(\boldsymbol{\bar{a}})= R_i(\boldsymbol{\bar{b}})$. $MP_i(\boldsymbol{\bar{a}})\leq MP_i(\boldsymbol{\bar{b}})$  if and only if $\mu_i(\boldsymbol{\bar{a}}, M_i)\geq \mu_i(\boldsymbol{\bar{b}}, M_i)$.
\end{enumerate}
\end{definition}

To put it simply, a mechanism is consistent if the value it assigns to a play contradicts neither the outcome of the play nor the MP evaluation of the machine that is used to calculate the value of the mechanism. 

The three conditions in the definition of consistency specify how the mechanism should behave in different situations. For instance, if a play results in worse evaluation for both the human and the machine compared to another play, then the intelligence score of the latter must be greater than the intelligence score of the former according to the mechanism. Conditions (2) and (3) apply a comparative statics logic: if one of the evaluations (either the human evaluation or the machine evaluation) is held fixed, then the mechanism should be consistent with the other evaluation.

There is only one case where the concept of consistency does not apply. This is when there are two plays, $\boldsymbol{\bar{a}}$ and $\boldsymbol{\bar{b}}$, where $\boldsymbol{\bar{a}}$ leads to a strictly worse evaluation for the machine than $\boldsymbol{\bar{b}}$, and $\boldsymbol{\bar{b}}$ leads to a strictly worse evaluation for the human than $\boldsymbol{\bar{a}}$. In this case, consistency does not place any restrictions on the intelligence score of these plays. The reason is that sometimes a move that is optimal for a human may not be optimal for a machine, and vice versa. I next show that the GI mechanism is consistent. 

\begin{proposition}
\label{prop:consistent}
Fix a game $\Gamma$ and a machine $M$. Any mechanism of the form $\alpha R_i(\boldsymbol{\bar{a}}) - \beta MP_i(\boldsymbol{\bar{a}}) + \delta$, where $\alpha>0$, $\beta>0$, and $\delta\in \mathbb{R}$, is consistent. 
\end{proposition}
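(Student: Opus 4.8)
The plan is to reduce all three clauses of Definition~\ref{def:consistency} to the single identity obtained by subtracting the mechanism values of the two plays. Writing $\mu_i(\boldsymbol{\bar{a}}) = \alpha R_i(\boldsymbol{\bar{a}}) - \beta GPL_i(\boldsymbol{\bar{a}}) + \delta$, the additive constant $\delta$ cancels and one obtains
\[
\mu_i(\boldsymbol{\bar{a}}) - \mu_i(\boldsymbol{\bar{b}}) = \alpha\bigl(R_i(\boldsymbol{\bar{a}}) - R_i(\boldsymbol{\bar{b}})\bigr) - \beta\bigl(GPL_i(\boldsymbol{\bar{a}}) - GPL_i(\boldsymbol{\bar{b}})\bigr).
\]
Since $\alpha>0$ and $\beta>0$, the sign of the right-hand side is governed by the sign of the reward difference and of the negated GPL difference, and each of the three conditions then follows by substituting the respective hypotheses.

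For clause (1), assume $R_i(\boldsymbol{\bar{a}})>R_i(\boldsymbol{\bar{b}})$ and $GPL_i(\boldsymbol{\bar{a}})<GPL_i(\boldsymbol{\bar{b}})$. Then $\alpha\bigl(R_i(\boldsymbol{\bar{a}})-R_i(\boldsymbol{\bar{b}})\bigr)>0$ and $-\beta\bigl(GPL_i(\boldsymbol{\bar{a}})-GPL_i(\boldsymbol{\bar{b}})\bigr)>0$, so their sum is strictly positive, i.e.\ $\mu_i(\boldsymbol{\bar{a}})>\mu_i(\boldsymbol{\bar{b}})$. For clause (2), assume $GPL_i(\boldsymbol{\bar{a}})=GPL_i(\boldsymbol{\bar{b}})$; the identity collapses to $\mu_i(\boldsymbol{\bar{a}}) - \mu_i(\boldsymbol{\bar{b}}) = \alpha\bigl(R_i(\boldsymbol{\bar{a}})-R_i(\boldsymbol{\bar{b}})\bigr)$, and since $\alpha>0$ this is $\geq 0$ iff $R_i(\boldsymbol{\bar{a}})\geq R_i(\boldsymbol{\bar{b}})$, which is the claimed equivalence. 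For clause (3), assume $R_i(\boldsymbol{\bar{a}})=R_i(\boldsymbol{\bar{b}})$; then $\mu_i(\boldsymbol{\bar{a}}) - \mu_i(\boldsymbol{\bar{b}}) = -\beta\bigl(GPL_i(\boldsymbol{\bar{a}})-GPL_i(\boldsymbol{\bar{b}})\bigr)$, and since $\beta>0$ this is $\geq 0$ iff $GPL_i(\boldsymbol{\bar{a}})\leq GPL_i(\boldsymbol{\bar{b}})$. Finally, taking $\alpha=\beta=1$ and $\delta=0$ recovers the GI mechanism $GI_i(\boldsymbol{\bar{a}})=R_i(\boldsymbol{\bar{a}})-GPL_i(\boldsymbol{\bar{a}})$, so it is consistent as a special case.

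There is no genuine obstacle here: the statement amounts to the observation that an affine combination of $R_i$ and $GPL_i$ with a positive coefficient on the reward and a negative coefficient on the loss is weakly monotone in each argument separately, and strictly monotone when both arguments move in the favorable direction. The only points requiring care are the bookkeeping of inequality directions — in particular that a \emph{smaller} GPL is better, so that the coefficient multiplying $GPL_i$ must be negative — and the remark that $\delta$ is irrelevant because consistency is a purely comparative property between two plays.
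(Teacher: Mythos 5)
Your proof is correct and follows essentially the same route as the paper's: subtract the two mechanism values, observe that $\delta$ cancels, and read off each clause of Definition~\ref{def:consistency} from the signs of $\alpha$ and $-\beta$. The only (harmless) difference is one of presentation --- you prove the general affine form directly and recover the GI mechanism as the case $\alpha=\beta=1$, $\delta=0$, whereas the paper verifies the GI case explicitly and dispatches the general form ``for analogous reasons.''
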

    
\begin{proof}
Let $\boldsymbol{\bar{a}}$ and $\boldsymbol{\bar{b}}$ in $\boldsymbol{\bar{A}}$ be a pair of plays. First, assume that $R_i(\boldsymbol{\bar{a}})\geq R_i(\boldsymbol{\bar{b}})$ and $MP_i(\boldsymbol{\bar{a}})\leq  MP_i(\boldsymbol{\bar{b}})$. It implies that 
        \[
         R_i(\boldsymbol{\bar{a}}) - MP_i(\boldsymbol{\bar{a}}) \geq  R_i(\boldsymbol{\bar{b}}) - MP_i(\boldsymbol{\bar{b}}).
        \]
        Thus, $GI_i(\boldsymbol{\bar{a}}) \geq GI_i(\boldsymbol{\bar{b}})$. 
    
        Second, assume that $MP_i(\boldsymbol{\bar{a}})=MP_i(\boldsymbol{\bar{b}})$. Then,
     \[
         GI_i(\boldsymbol{\bar{a}}) - GI_i(\boldsymbol{\bar{b}}) = R_i(\boldsymbol{\bar{a}}) - MP_i(\boldsymbol{\bar{a}}) - R_i(\boldsymbol{\bar{b}}) + MP_i(\boldsymbol{\bar{b}}).
        \]
         \[
          = R_i(\boldsymbol{\bar{a}})  - R_i(\boldsymbol{\bar{b}}).
        \]
        Thus, $GI_i(\boldsymbol{\bar{a}}) \geq GI_i(\boldsymbol{\bar{b}})$ if and only if $R_i(\boldsymbol{\bar{a}})  \geq  R_i(\boldsymbol{\bar{b}})$.
    
        Third, assume that $R_i(\boldsymbol{\bar{a}})= R_i(\boldsymbol{\bar{b}})$. Then,
     \[
        GI_i(\boldsymbol{\bar{a}}) - GI_i(\boldsymbol{\bar{b}}) = MP_i(\boldsymbol{\bar{b}}) - MP_i(\boldsymbol{\bar{a}}).
        \]
    Therefore, $GI_i(\boldsymbol{\bar{a}}) \geq GI_i(\boldsymbol{\bar{b}})$ if and only if $MP_i(\boldsymbol{\bar{b}}) \geq MP_i(\boldsymbol{\bar{a}})$.
        
    Clearly, the above calculations remain true if $R_i$ and $MP_i$ are scaled by strictly positive real numbers. Thus, any mechanism given by $\alpha R_i(\boldsymbol{\bar{a}}) - \beta MP_i(\boldsymbol{\bar{a}}) + \delta$, where $\alpha>0$, $\beta>0$, and $\delta\in \mathbb{R}$, must be consistent. 
    \end{proof}

It is easy to see that purely machine-based metrics such as MP would not be consistent in part because they do not depend on the outcome of a game. Analogously, the intelligence measures discussed in section~\ref{sec:literature} would not be consistent.

\section{The dataset, computational methods, and results}
\label{sec:data}

As shown by Table~\ref{tab:overview_datasets}, the full dataset consists of about 1.5 billion chess moves played by top grandmasters, regular players, and machines in engine-vs-engine competitions. Overall, I estimate that the total execution time for completing all computational tasks was roughly 700 hours on a computer equipped with an Intel\textsuperscript{\textregistered} Core\textsuperscript{TM} i7-6700 CPU, 4 cores/8 threads. This estimate includes all tasks such as data collection and cleaning, annotating games with raw (i.e., centipawn) Stockfish evaluations, and computing the relevant metrics for every move and game.

\subsection{World championships and elite tournaments}

This subset consists of approximately 1.2 million individual moves played in approximately 15,000 games by world-class chess grandmasters. The games were downloaded in Portable Game Notation (PGN) format from pgnmentor.com. The dataset includes games from world championships and many elite tournaments played by top grandmasters. 

I used these PGNs to obtain Stockfish raw (centipawn) evaluations from the Lichess Masters Database. Although this database is not publicly available to download, Lichess.org allows users to search opening positions to identify games between top players. The database includes many games that have already been annotated with Stockfish evaluations. I wrote a script to isolate the first 20 moves of each game, then used the Lichess Masters Database API to match these sequences with existing records. I downloaded and separated those PGN files that were (at least partially) annotated. I completed the missing evaluations by running them through Stockfish 16 with a depth setting of 20.

I have evaluated a subset of roughly 18\% of these games with Stockfish 16, the state-of-the-art chess analysis engine as of July 2023, on a personal computer. This process involved annotating every position in these games using the Stockfish centipawn evaluations. To calculate the MP of the players in these games, I used the empirical win-draw-loss percentages as a function of centipawn loss. These percentages were derived from a dataset where Stockfish played against itself.

I compared my Stockfish evaluations with Lichess's Stockfish evaluations of around 87,000 positions. The results closely mirrored those reported in section~\ref{sec:overview}. To conserve computational resources, I continued using Lichess's Stockfish evaluations in the dataset, with the exception of those games that had missing evaluations or the games of Bobby Fischer, most of which were not part of the Lichess database.

Cleaning the data involved separating non-classical games, such as rapid, blitz, online, blindfold, simul, and exhibition matches, from the classical ones. Additionally, games played by Fischer, Carlsen, and Kasparov in which the Elo rating was below 2500 were omitted, as this rating is generally accepted as the threshold for grandmaster-level play. I expanded the dataset by downloading games from almost all major tournaments and included the games which had at least one top 20 player (or Karpov) according to all-time Elo rankings.

Interestingly, there were very few Bobby Fischer games in the Masters Database. Thus, I analyzed as many of his classical games as possible using Stockfish 16 with a depth of 20. The games of Fischer and the games with missing evaluations are the only entries in my dataset that do not use Lichess's Stockfish evaluations.

Once the dataset grew to more than a million moves, I decided I had reached a logical stopping point. This was to avoid overusing the computational resources of my university as well as the resources provided by Lichess, which is funded by user donations.

\subsubsection{The top 3 vs other grandmasters}

The results in elite competitions clearly separate the top 3 players Fischer, Carlsen, and Kasparov from others. To further analyze whether their GI score distributions are statistically greater than those of the other grandmasters, I used a one-sided Mann-Whitney U test.\footnote{Since a chess game is played by two players, the GI scores of the two players are not independent. However, the top 3 players, Carlsen, Fischer, and Kasparov did not play a classical game against each other as far as I am aware.} Table~\ref{tab:mann-whitney_GI} shows the p-values resulting from the tests under the null hypothesis that there is no difference in the distributions between the row player and the column player. 

For example, the p-value of 0.39 in the Carlsen-Kasparov cell indicates there is insufficient evidence to conclude that Carlsen's GI score is greater than Kasparov's. Similarly, neither Carlsen's nor Kasparov's GI scores are statistically significantly different from Fischer's GI scores. In contrast, the table reveals that the differences in GI scores of each of the top 3 players are statistically significant, which means that there is sufficient evidence that their GI scores are greater than those of the other grandmasters.

The p-values for MP distributions are shown in Table~\ref{tab:mann-whitney_MP} in the Appendix. While there is weak evidence that Kasparov's MPs are greater than Carlsen's, the evidence is stronger that Fischer's MPs are greater than Carlsen's.

\begin{table}[h!]
\centering
\begin{tabular}{|l|c|c|c|c|c|c|}
\hline
& Kasparov & Fischer & Kramnik & Anand & Karpov & Topalov \\ \hline
Carlsen  & 0.39 & 0.70 & 0.00 & 0.00 & 0.00 & 0.00 \\ \hline
Kasparov  &  & 0.73 & 0.00 & 0.00 & 0.00 & 0.00 \\ \hline
Fischer  & &  & 0.00 & 0.00 & 0.00 & 0.00 \\ \hline
Kramnik  & & &  & 0.28 & 0.01 & 0.00 \\ \hline
Anand  & & & &  & 0.03 & 0.00 \\ \hline
Karpov  & & & & &  & 0.00 \\ \hline
\end{tabular}
\caption{P-values obtained from one-sided Mann-Whitney U tests. Null hypothesis: there is no difference in the GI score distributions between the row player and the column player.}
\label{tab:mann-whitney_GI}
\end{table}

\subsection{Regular players}
\label{sec:population_average}

To standardize the raw GI scores, I first computed the sample average and standard deviation in a large dataset of regular players, downloaded from Lichess Open Database, which consists of all games except bullet games played by their users (slightly more than one million players).\footnote{While official world championships exist for classical, rapid, and blitz chess, bullet chess (including hyper-bullet) does not have such recognition. Therefore, bullet games were excluded from this dataset.} This dataset consists of two parts, the games played in August 2023 (part 1) and in September 2023 (part 2).  Table~\ref{tab:population_average} presents the average raw GI scores, MPs, and their standard deviations. To check the robustness of these numbers, I have computed the same statistics for each part separately. As the table shows, both the averages and the standard deviations are remarkably close in each part. 

Before standardizing the raw GI scores, I weighted them using the Elo ratings of the opponents since the dataset includes players from all levels. The weight decreases as the Elo rating of the opponent decreases, and vice versa. To provide some background, the Elo rating system is used to rank players based on their past performance. Given two players with Elo ratings $E_i$ and $E_j$, the expected score for player $i$, represented by $ES$, is computed using the following formula:
\[
ES(E_i,E_j) = \frac{1}{1+10^{(E_j-E_i)/400}}.
\]
Let $GI'^{\text{raw}}$ be the player's raw GI score before weighting. Then, the raw GI score $GI^{\text{raw}}$ is calculated as follows.
\begin{align*}
GI^{\text{raw}} = GI'^{\text{raw}} - (1-2\times ES(E_j,2800))~|GI'^{\text{raw}}|,
\end{align*}
where $|\cdot|$ is the absolute value operator. The intuition behind this formula is as follows: If the  player's opponent has an Elo of $E_j=2800$, then the initial raw GI score remains the same, as this rating corresponds to the top 99th percentile in Lichess. As the rating $E_j$ of the opponent decreases, so will $ES(E_j,2800)$, hence a lower raw GI score will be assigned for winning against weaker players.

Next, to adjust the mean to 100 and the standard deviation to 15, I use the modified Z-score formula as follows.
\begin{equation*}
    z_i = 15 \times \frac{(x_i-\mu)}{\sigma} + 100.
\end{equation*}
In this formula, $x_i$ represents individual data points. The multiplication by 15 and the addition of 100 are adjustments to scale and shift the data to the desired mean and standard deviation.

The overall transformation can be expressed in the form of a linear function $z_i = b + a \times x_i$. The parameters $a$ and $b$ in this context are determined as follows: $a = \frac{15}{\sigma}$ and $b = -\frac{15\mu}{\sigma} + 100$. These values of $a$ and $b$ are then used to transform each data point $x_i$ to the corresponding $z_i$, ensuring the normalized dataset has a mean of 100 and a standard deviation of 15: 
$GI = -\frac{15\mu}{\sigma} + 100 + \frac{15}{\sigma} GI^{\text{raw}}$. 

The combined dataset is processed to calculate player-specific statistics. I considered only the players with at least 50 games and an average MP of at least 0. This criterion rules out the ``superhuman'' players who managed to play at least as well as Stockfish in 50 games, which is practically impossible for a human. Applying this filter reduced the dataset to 797,410,344 moves. Among these players, the average raw GI score is -3.1027 and the average MP is 0.8085, where both values are rounded. As a result, I obtain the following linear equation that transforms the raw GI score to the GI score: 
\begin{align*}
GI =  157.57  + 18.55 \times GI^{\text{raw}}.
\end{align*}
Thus, the mean GI score is 100 with a standard deviation of 15, which is in line with how the Intelligence Quotient (IQ) is presented.\footnote{Hence, one may refer to the GI score in this context as ``chess intelligence'' or ``chess IQ.'' Similarly, the GI score in a specific game or sport can be referred to with the game's name.}
\begin{table}[t!]
\centering
\begin{tabular}{lcc|cc|cc}
& \multicolumn{2}{c|}{Combined} & \multicolumn{2}{c|}{Part 1} & \multicolumn{2}{c}{Part 2} \\ \hline 
& Games & Moves & Games & Moves & Games & Moves \\ \hline \hline 
& 20,157,126 & 1,343,129,524 & 6,946,736 & 463,057,202 & 13,210,390 & 880,072,322 \\ \hline  \\ \hline 
& Average & Std Dev & Average & Std Dev & Average & Std Dev \\ \hline
\hline
Raw GI & -3.084 & 2.810 & -3.079 & 2.806 & -3.087 & 2.812 \\
MP & 1.973 & 1.367 & 1.970 & 1.365 &  1.975 & 1.368 \\
\hline
\end{tabular}
\caption{Classical, rapid, and blitz chess games data summary statistics.}
\label{tab:population_average}
\end{table}

\subsection{Engine-vs-engine games}
\label{subsec:ccrl}

As summarized in Table~\ref{tab:average_engines}, this subset includes over 200 million moves played by engines under ``CCRL 40/15'' testing conditions. These conditions, formulated by the Computer Chess Rating Lists (CCRL) organization, ensure a level playing field for computer chess analysis. Games are played with a time limit of 40 moves in 15 minutes on an Intel i7-4770K processor.  To further promote fairness, the use of specialized engine opening books is prohibited, and a common opening book limited to a depth of 12 moves is used. 

Additionally, all the games are played with the ponder feature deactivated, which means that each engine ``thinks'' only on its turn. This implies that the standard way to compute the MP (as well as traditional metrics such as average centipawn loss) would produce meaningless results because the evaluations of two different machines will obviously be very different. To define the GI score and MP in this setting, I use the fact that an engine always chooses the top move according to its own evaluation and therefore records 0 MP on its own turn. Consequently, the expected loss of an engine's move (say, Stockfish) can be measured by the evaluations of the opponent engine's (say, Komodo) preceding move and the move that follows Stockfish's move. Essentially, each engine measures the MP of its opponent.

Definition~\ref{def:game_intelligence}, the raw GI score, is updated for this framework as follows. Fix a two-player alternating-move game $\Gamma$, with machines $M$ and $M'$ playing as player 1 and player 2, respectively.  Let $\boldsymbol{\bar{a}}\in \boldsymbol{\bar{A}}$ be a play on the path $[x_m]$, where each action is machine-optimal. 

Player 1's missed point from the action $a_1(x)$ at node $x\in X$, where $1=I(x)$, is defined as 
\[
EV_{1}\left(a_2(a_1(x)), M'_{1}\right) - EV_{1}\left(x, M'_{1}\right).
\]
Here, the term $EV_{1}\left(a_2(a_1(x)), M'_{1}\right)$ is the expected value for player 1, as evaluated by player 2's machine $M'$, after player 2 responds to player 1's move $a_1(x)$. The second term $EV_{1}\left(x, M'_{1}\right)$ is the expected value for player 1, also as evaluated by player 2's machine, before making move $a_1(x)$. In this way, player 1's MPs throughout the game are measured using a single, consistent reference machine. Player 2's MPs are computed analogously. GI scores are then derived from these MP values.

The main difference between the standard GI score and this adapted version is the reference point for measuring a miss. In the standard definition, a move's miss is measured against the counterfactual best move that the same engine would have made in that position. In the adapted definition, the miss is measured as the difference in the opponent engine's evaluation before and after the move. In effect, each engine's intelligence is measured using the intelligence of the other engine.

Table~\ref{tab:all_engines} and Table~\ref{tab:all_engines_best_version} show the GI rankings of the engines that played at least 1000 games under CCRL conditions. The top five engines are Stockfish, Houdini, Komodo, Rybka, and Torch, respectively. These engines each have a negative MP, which indicates that they perform better than the engines evaluating them. Note that in human-vs-human games, MPs are all positive, as Stockfish is significantly stronger than humans.

\section{Concluding remarks}
\label{sec:conclusions}

Intelligence is a complex construct that cannot be perfectly captured by the outcome of a single game, even in highly skill-based games like chess. If a player receives a large reward from the outcome of a game, this does not necessarily indicate high intelligence, as chance may have significantly affected the outcome. For instance, in backgammon, a player may make a poor move but still have a positive probability of winning the game. Furthermore, machine-optimal moves may not always be optimal from a human perspective, and conversely, human-optimal moves may not be machine-optimal. For example, top chess players occasionally play ``suboptimal'' moves to increase their probability of winning a game. Thus, to assess intelligence in games, one must take into account both human- and machine-based (counterfactual) decisions.

I proposed the Game Intelligence (GI) as a practical mechanism that has broad applicability and can be used to assign intelligence scores to players and teams across a wide range of $n$-person games, including sports, board games, and video games, as well as individual decision problems. Nevertheless, caution is warranted when implementing the GI mechanism in sports or games where AI systems have not yet achieved a level of intelligence sufficient for measuring human intelligence.

The GI mechanism can also be used to assign intelligence scores to AI systems themselves. A potential future research direction is to use GI scores to train AI agents, as they provide more refined feedback than the simple binary outcome of a game.

\bibliographystyle{chicago}
%\bibliography{ref}

\section*{Appendix}

\section{Additional tables}

\begin{table}[t!]
\centering
\begin{tabular}{lcccccccccc}
\hline
Player & GI  & $\text{GI}_W$ & $\text{GI}_B$ & MP & $\text{MP}_W$ & $\text{MP}_B$ & $\#$Games & $\#$Moves \\ \hline
\hline
Fischer & 157 & 158 & 156 & 0.75 & 0.73 & 0.76 & 640 & 26,669 \\
Carlsen & 157 & 157 & 157 & 0.66 & 0.70 & 0.63 & 1,187 & 57,422 \\
Kasparov & 157 & 158 & 156 & 0.71 & 0.70 & 0.71 & 1,188 & 46,103 \\
Karjakin & 155 & 158 & 153 & 0.66 & 0.62 & 0.70 & 493 & 23,388 \\
Aronian & 155 & 156 & 155 & 0.67 & 0.70 & 0.63 & 641 & 30,264 \\
Kramnik & 155 & 156 & 155 & 0.74 & 0.79 & 0.70 & 1,250 & 56,758 \\
Anand & 155 & 155 & 154 & 0.75 & 0.78 & 0.72 & 1,529 & 64,173 \\
Giri & 154 & 155 & 153 & 0.70 & 0.73 & 0.68 & 440 & 19,963 \\
Leko & 154 & 155 & 153 & 0.72 & 0.71 & 0.73 & 443 & 20,170 \\
Mamedyarov & 154 & 157 & 151 & 0.72 & 0.63 & 0.81 & 437 & 18,230 \\
Karpov & 154 & 155 & 153 & 0.87 & 0.89 & 0.83 & 1,448 & 72,154 \\
Svidler & 154 & 156 & 151 & 0.71 & 0.67 & 0.75 & 453 & 17,484 \\
Ivanchuk & 153 & 154 & 152 & 0.79 & 0.78 & 0.81 & 718 & 29,450 \\
Nakamura & 153 & 156 & 149 & 0.79 & 0.68 & 0.91 & 467 & 21,807 \\
Gelfand & 153 & 153 & 152 & 0.79 & 0.81 & 0.78 & 559 & 23,236 \\
Caruana & 152 & 154 & 151 & 0.85 & 0.83 & 0.87 & 559 & 26,851 \\
Grischuk & 152 & 155 & 150 & 0.81 & 0.79 & 0.82 & 401 & 18,496 \\
Adams & 152 & 154 & 149 & 0.82 & 0.77 & 0.86 & 452 & 20,105 \\
Topalov & 150 & 153 & 148 & 0.90 & 0.80 & 0.98 & 695 & 30,015 \\
Shirov & 150 & 152 & 149 & 0.90 & 0.90 & 0.91 & 540 & 23,051 \\
Short & 148 & 151 & 144 & 0.99 & 0.90 & 1.12 & 400 & 18,479 \\
$\vdots$  & $\vdots$ & $\vdots$ & $\vdots$ & $\vdots$ & $\vdots$ & $\vdots$& $\vdots$ & $\vdots$ \\
Cordoba & 143 & 141 & 145 &1.07 & 1.19 & 0.96 & 72 & 2,938 \\
Vaganian & 143  & 146 & 139 &1.15 & 1.04 & 1.25 & 54 & 2,383 \\
Larsen & 140  & 140 & 140 & 1.23 & 1.18 & 1.27 & 80 & 3,084 \\
\hline
\end{tabular}
\caption{Summary of computational results for ``super'' grandmasters. Mean values of GI and MP are presented. The subscripts $W$ and $B$ refer to White and Black, respectively.}
\label{tab:average_GMs}
\end{table}

\newpage

\begin{table}[t!]
\centering
\begin{tabular}{lcccccccc}
\hline
Engine & GI & $\text{GI}_W$ & $\text{GI}_B$ & MP & $\text{MP}_W$ & $\text{MP}_B$ & $\#$Moves & $\#$Games \\ 
\hline
Stockfish & 172 & 174 & 170 & -0.14 & -0.19 & -0.09 & 10080134 & 158999 \\
Komodo & 171 & 173 & 169 & -0.10 & -0.14 & -0.05 & 8043284 & 119271 \\
Torch & 171 & 173 & 168 & -0.09 & -0.14 & -0.04 & 216237 & 3307 \\
Houdini & 169 & 171 & 167 & -0.05 & -0.08 & -0.02 & 3285536 & 46810 \\
Rybka & 169 & 169 & 168 & -0.02 & 0.02 & -0.05 & 3498424 & 52570 \\
Comet & 168 & 170 & 167 & -0.10 & -0.14 & -0.06 & 66173 & 1043 \\
Vitruvius & 168 & 170 & 167 & -0.06 & -0.10 & -0.02 & 295741 & 4354 \\
NagaSkaki & 168 & 170 & 167 & -0.08 & -0.11 & -0.06 & 98912 & 1573 \\
Berserk & 168 & 170 & 166 & -0.04 & -0.09 & 0.01 & 1664702 & 25215 \\
BBChess & 168 & 169 & 168 & -0.07 & -0.08 & -0.06 & 87059 & 1369 \\
$\vdots$  & $\vdots$ & $\vdots$ & $\vdots$ & $\vdots$ & $\vdots$ & $\vdots$& $\vdots$ & $\vdots$ \\
Clarity 	& 157	& 158	& 157	& 0.24	& 0.22 & 	0.26 &	72938 &	1297 \\
\hline
\end{tabular}
\caption{Summary of computational results for the best versions of engines in the engine-vs-engine dataset. Means of GIs and MPs are presented.}
\label{tab:all_engines}
\end{table}

\begin{table}[t!]
\centering
\begin{tabular}{lcccccccc}
\hline
Engine & GI & $\text{GI}_W$ & $\text{GI}_B$ & MP & $\text{MP}_W$ & $\text{MP}_B$ & $\#$Moves & $\#$Games \\ 
\hline
Stockfish 150720 64-bit & 172 & 177 & 168 & -0.29 & -0.37 & -0.21 & 37844 & 558 \\
Houdini 3 32-bit & 170 & 171 & 168 & -0.28 & -0.26 & -0.30 & 57846 & 838 \\
Komodo 11.01 64-bit & 168 & 170 & 165 & -0.20 & -0.25 & -0.16 & 50260 & 799 \\
Rybka 3 32-bit & 165 & 164 & 167 & -0.15 & -0.05 & -0.26 & 195656 & 2823 \\
Torch v1 64-bit & 162 & 165 & 159 & -0.09 & -0.14 & -0.05 & 121710 & 1860 \\
NagaSkaki 4.00 & 162 & 164 & 160 & -0.23 & -0.27 & -0.19 & 17527 & 286 \\
Berserk 12 64-bit & 162 & 165 & 159 & -0.09 & -0.15 & -0.04 & 112855 & 1656 \\
Vitruvius 1.11C 32-bit & 162 & 164 & 160 & -0.14 & -0.16 & -0.12 & 97409 & 1453 \\
BBChess 1.3a & 160 & 161 & 160 & -0.13 & -0.12 & -0.14 & 34071 & 528 \\
Comet B68 & 159 & 161 & 157 & -0.10 & -0.14 & -0.06 & 66173 & 1043 \\ 
$\vdots$  & $\vdots$ & $\vdots$ & $\vdots$ & $\vdots$ & $\vdots$ & $\vdots$& $\vdots$ & $\vdots$ \\
Clarity 2.0.0 64-bit	& 140	& 141	& 139	& 0.28 & 	0.26& 	0.30& 	63446	& 1149 \\
\hline
\end{tabular}
\caption{Summary of computational results for the best versions of engines. The subscripts $W$ and $B$ refer to White and Black, respectively.}
\label{tab:all_engines_best_version}
\end{table}

\begin{table}[h!]
\centering
\begin{tabular}{|l|c|c|c|c|c|c|c|}
\hline
& Carlsen & Kasparov & Fischer & Kramnik & Anand & Karpov \\ \hline
Kasparov & 0.09 &  & & & &  \\ \hline
Fischer & 0.00 & 0.00 &  & & &  \\ \hline
Kramnik & 0.01 & 0.18 & 0.97 &  & &  \\ \hline
Anand & 0.00 & 0.07 & 0.93 & 0.30 &  &  \\ \hline
Karpov & 0.00 & 0.00 & 0.02 & 0.00 & 0.00 &   \\ \hline
Topalov & 0.00 & 0.00 & 0.00 & 0.00 & 0.00 & 0.11 \\ \hline
\end{tabular}
\caption{P-values under the null hypothesis that there is no difference in the MP distributions between the row player and the column player.}
\label{tab:mann-whitney_MP}
\end{table}
\end{document}